\documentclass[12pt]{article}
\usepackage{setspace}
\usepackage[authoryear]{natbib}
\usepackage{graphicx}
\usepackage{lscape}
\usepackage{pdflscape}
\usepackage{afterpage}
\usepackage{pdfpages}
\usepackage{float}
\usepackage{booktabs}
\usepackage[utf8]{inputenc}
\usepackage{indentfirst}
\usepackage{arydshln}
\usepackage{tikz}
\usepackage{pgffor}
\usepackage{dsfont}
\usepackage{amsfonts}
\usepackage{amsmath}
\usepackage{amssymb}
\usepackage{url}
\usepackage{subfig}
\usepackage{array}
\usepackage{threeparttable}
\usepackage{bbm}
\usepackage{hyperref}
\usepackage{amsthm}

\newcolumntype{C}[1]{>{\centering\arraybackslash}p{#1}}

\newtheorem{proposition}{Proposition}[section]

\newtheorem{remark}{Remark}

\begin{document}

\title{
\large ON THE USE OF DESIGN-BASED SIMULATIONS
\footnote{I gratefully acknowledge financial support from FAPESP and CNPq. An earlier version of part of this material was presented in \cite{Ferman_assessment}.
}
}

\author{
Bruno Ferman\footnote{email: bruno.ferman@fgv.br; Sao Paulo School of Economics – FGV}
}

\date{}
\maketitle


\begin{abstract}
Design-based simulations—procedures that hold realized outcomes fixed and generate variation by resampling treatment assignment or shocks—are widely used in both methodological and applied work to assess inference procedures. This paper studies the extent to which such simulations are informative about inference validity. Focusing on shift-share designs, we show that standard simulations that fix outcomes and resample shocks may rely on a data-generating process that is not aligned with the true one. In particular, these simulations confound true treatment effects with error dependence, potentially overstating inference distortions due to spatial correlation. We propose alternative simulation designs that circumvent this problem and illustrate their use in prominent empirical applications. Our results highlight that the usefulness of design-based simulations depends critically on how closely the simulated data-generating process aligns with the true one.
\end{abstract}

\bigskip
\noindent
\textit{Keywords:} design-based simulations, inference, shift-share designs, spatial correlation

\noindent
\textit{JEL Codes:} C12; C21

\newpage


\section{Introduction}

Simulation-based analyses play a central role in econometric research. They are routinely used to study the finite-sample properties of estimators and inference procedures, and to illustrate potential failures of commonly used methods. An important  class of such exercises consists of what we refer to as \emph{design-based simulations}, in which realized outcomes are treated as fixed and uncertainty in the simulations is generated by resampling treatment assignment, shocks, or other components of the research design.

Design-based simulations have been used in methodological papers to illustrate inference problems in prominent empirical settings. For example, \citet{Bertrand04howmuch} simulate placebo laws using state-level data on female wages from the Current Population Survey to illustrate severe size distortions of conventional inference methods in difference-in-differences (DiD) designs caused by serial correlation. More recently, design-based simulations have played a central role in the analysis of inference methods in shift-share designs. \citet{Adao} and \citet{Borusyak} use simulations based on resampling sectoral shocks to illustrate how spatial correlation can invalidate standard inference procedures and to analyze the properties of alternative inference methods in this setting. Related simulation-based approaches are also considered by \citet{Chaisemartin_cluster} in the context of small-strata randomized experiments. Beyond their use as illustrative tools in methodological work, design-based simulations can also be employed by applied researchers as application-specific diagnostics. In this context, researchers may use design-based simulations tailored to their empirical design to assess whether a given inference method is likely to perform well in their specific setting.

In design-based simulations, researchers usually define a data-generating process (DGP) based on an empirical application or a specific dataset, in which potential outcomes are held fixed and the null hypothesis is satisfied by construction. Interpreting the results from such simulations, however, can be subtle. Whether the resulting rejection frequencies are informative about validity of an inference method in the original empirical setting depends critically on how the DGP used in the design-based simulations relates to the true DGP along dimensions relevant for inference. As a result, design-based simulations may be highly informative in some settings, but potentially misleading in others. Moreover, design-based simulations have often been used to illustrate issues related to the distribution of the errors (for example, serial or spatial correlation), which at first glance appears at odds with the fact that potential outcomes---and therefore errors---are held fixed in these simulations.

This paper studies the properties of design-based simulations, with a particular focus on shift-share designs, where such simulations have been especially influential in recent methodological debates. We show that when treatment effects are present, standard design-based simulations may mechanically confound treatment effects with error dependence, overstating the over-rejection of inference methods that do not allow for spatial correlation. We then discuss two alternative design-based simulation designs that circumvent these problems. Our analysis also clarifies how design-based simulations, despite holding errors fixed, can nonetheless be informative about characteristics of the error distribution. We also show that the alternative design-based simulations we propose can be more informative about distortions due to spatial correlation than tests based on the statistical significance of regressions using pre-treatment outcomes.

Finally, we analyze the use of design-based simulations in three shift-share design empirical applications. Our results reinforce the central insight of \citet{Adao} that standard inference methods should be used with caution in shift-share applications, given the possibility of spatial correlation. At the same time, we show that commonly used design-based simulations tend to overstate the magnitude of size distortions attributable to spatial correlation, because they confound true treatment effects with error dependence. As a result, applied researchers who rely on standard design-based simulations in their own empirical applications may be led to misleading conclusions about the relative performance of alternative inference methods. We illustrate how the alternative simulation designs we propose avoid these issues and provide more informative guidance for inference in shift-share designs.  

\section{Design-based simulations in randomized experiments}
\label{Sec: DB in RCT}

We consider first a simple setting of randomized experiments, in which we can illustrate the fundamental problem that generally prevents design-based simulations from recovering the true DGP in an empirical application, even when the distribution of treatment allocation is known.

Suppose we observe $i=1,...,N$ individuals whose potential outcomes are given by $Y_i(0)$ and $Y_i(1)$, and the only source of uncertainty comes from the treatment allocation $\mathbf{T} = (T_1,...,T_N)$, which is assumed to have a known distribution \citep{Finite_pop,NBERw24003,fisher,Neyman1990}. The target parameter is the sample average treatment effect  (SATE), $\frac{1}{N} \sum_{i=1}^N (Y_i(1) - Y_i(0))$.\footnote{\cite{Finite_pop} and \cite{NBERw24003} also consider the possibility of sampling uncertainty. In this case, an alternative target parameter would be the population average treatment effect. For simplicity, we abstract from that.} In this case, inference using robust standard errors is asymptotically valid for the SATE in this sampling framework when $N \rightarrow \infty$ (although in some cases it may be conservative). Still, it might be that an econometrician or an applied researcher wants to study the properties of such inference method in a specific empirical application with a fixed $N$.

In this setting, the true DGP is characterized by $\{Y_i(0),Y_i(1) \}_{i=1}^N$ and the distribution of $\mathbf{T}$. Therefore, if we knew $Y_i(0)$ and $Y_i(1)$ for all $i$, then knowledge of the distribution of $\mathbf{T}$ would provide all information needed to recover the true probability of rejection for a test based on, for example, robust standard errors. More specifically, note that the distribution of the t-statistic with robust standard errors will be a function of $\{Y_i(0),Y_i(1)\}_{i=1}^N$ and $\mathbf{T}$, where the only stochastic component in this design-based setting is $\mathbf{T}$. Therefore, we would be able to calculate the exact probability that the t-statistic would be greater than a given critical value and, thus, the probability of rejection.\footnote{Since the true SATE may differ from zero, calculating the size of the test in this hypothetical scenario would require testing the null $H_0: SATE = \frac{1}{N}\sum_{i=1}^N \left(Y_i(1)-Y_i(0)\right)$, that is, the true SATE. Of course, since $\frac{1}{N}\sum_{i=1}^N \left(Y_i(1)-Y_i(0)\right)$ is generally unknown, this is not feasible in practice.}
 

However, we do not know $Y_i(0)$ and $Y_i(1)$ for all $i$. In this case, a commonly used way to construct design-based simulations in this setting is to hold the vector of realized outcomes fixed ($\mathbf{Y}$), and consider different allocations of $\mathbf{T}$ coming from the known distribution of $\mathbf{T}$. Then, for each allocation of $\mathbf{T}$ we estimate the parameter of interest and test the null using the inference method being analyzed. This design-based simulation implicitly considers a DGP in which potential outcomes are given by  $\{ \widetilde Y_i(0),\widetilde Y_i(1) \}_{i=1}^N$, where $\widetilde Y_i(1) = \widetilde Y_i(0) = Y_i$, and the true distribution of $\mathbf{T}$. Therefore, in this DGP the null hypothesis that the SATE equals zero is true, because $\frac{1}{N} \sum_{i=1}^N (\widetilde Y_i(1) - \widetilde Y_i(0))=0$, and the design-based simulations gives us the size of this test if potential outcomes were given by $\{ \widetilde Y_i(0),\widetilde Y_i(1) \}_{i=1}^N$. 

The fundamental challenge with this approach is that the true DGP would not  be given by $\{ \widetilde Y_i(0),\widetilde Y_i(1) \}_{i=1}^N$ if  $Y_i(1) \neq Y_i(0)$ for some $i$. Hence, even if we know the distribution of $\mathbf{T}$, a design-based simulation holding $\mathbf{Y}$ fixed would not necessarily recover the true finite-sample rejection rate of a given inference method or the true coverage for a confidence interval. It is therefore crucial to understand how the DGP underlying such design-based simulations may differ from the true DGP along dimensions that are relevant for finite-sample rejection rates of the inference method under consideration. For example, suppose that $\{Y_i(0)\}_{i=1}^N$ and $\{Y_i(1)\}_{i=1}^N$ are symmetric and single-peaked in their empirical distribution, with $Y_i(1)=Y_i(0)+\beta$ for all $i$ and some constant $\beta>0$. In this case, a design-based simulation that fixes $\mathbf Y$ and sets $\widetilde Y_i(1)=\widetilde Y_i(0)=Y_i$ would instead consider a DGP in which the  distributions $\{\widetilde{Y}_i(0)\}_{i=1}^N$ and $\{\widetilde{Y}_i(1)\}_{i=1}^N$ may be dual peaked, thus  differing fundamentally from the true DGP. 

In Section  \ref{Sec: SS}, we analyze in detail how the fact that design-based simulations do not recover exactly the true DGP implies an important challenge for the use of design-based simulations in  shift-share designs.

\begin{remark}
    
While similar at first glance, this type of design-based simulation is conceptually distinct from randomization inference in randomized experiments. In randomization inference, permutations of the treatment assignment are used to conduct exact finite-sample tests of a \emph{sharp} null hypothesis, such as $Y_i(1)=Y_i(0)$ for all $i$. In contrast, the design-based simulation described above uses permutations of the assignment to construct a DGP based on the observed outcomes in the application at hand, with the goal of evaluating the finite-sample properties of an alternative inference procedure (e.g., a t-test with robust standard errors for hypotheses regarding the SATE).

\end{remark}

\section{Design-based simulations in shift-share designs} \label{Sec: SS}

\subsection{Shift-share design framework} \label{Sec: SS setting}

We consider now the use of design-based simulations in shift-share design settings. Shift-share designs are specifications that study the impact of a set of shocks on units differentially exposed to them. Consider a setting with $i=1,...,N$ regions that are subject to $f=1,...,F$ aggregate shocks $\mathcal{X}_f$. The shift-share variable is given by ${x}_i = \sum_{f=1}^F w_{if}\mathcal{X}_f$, where the shares $w_{if}$ reflect how shock $\mathcal{X}_f$ affects unit $i$. The regression model is then given by
\begin{eqnarray}
y_i = \beta_0 + \beta x_i + \epsilon_i,
\end{eqnarray}
where $\beta$ reflects the effect of $x_i$ on $y_i$. Let $\mathbf{y}$ be the $N \times 1$ vector of outcomes, and $\Omega$ be the $N\times F$ matrix with all shares for all regions.  

A key insight from  \cite{Adao} is that, when testing the null $H_0: \beta=0$, an important challenge in this setting is that the error term $\epsilon_i$ may be spatially correlated. In particular,  if regions with similar shares tend to have correlated errors, this spatial correlation leads to over-rejection if one relies on robust or cluster-robust standard errors.   In order to illustrate this issue, \cite{Adao} consider design-based simulations in which they hold  $\mathbf{y}$ and $\Omega$ fixed, and draw realizations of $\{\tilde{\mathcal{X}}_f\}_{f=1}^F$ for a given distribution for the shocks. For each realization of the shocks in these simulations, they estimate the shift-share estimator from  $y_i = \gamma_0 + \gamma \tilde x_i^b + \tilde \epsilon_i$, where $\tilde x_i^b = \sum_{f=1}^F w_{if}\tilde{\mathcal{X}}^b_f$, and test the null $\gamma=0$ using robust or cluster-robust standard errors. In these simulations, they find rejection rates much higher than the nominal size of the tests. 

\cite{Adao} and \cite{Borusyak} propose interesting alternatives for inference taking into account the possibility of spatial correlation. They consider a 
design-based setting, where shocks $\mathcal{X}_f$ are stochastic, while potential outcomes and shares are conditioned on.\footnote{See \cite{GP} for an alternative sampling framework for shift-share designs.}
A simplified version of their potential outcomes model is given by $y_i(x) = y_i(0) + \beta x$, where we assume for simplicity linearity and treatment effect homogeneity across shocks. In this setting, the identification assumption is that $\mathbb{E}[\mathcal{X}_f | \mathcal{L}]=c$ for a constant $c$, where $\mathcal{L}$ includes  potential outcomes and  shares. The inference methods they developed  are asymptotically valid in this sampling framework when (i) shocks are independent, (ii) the number of shocks goes to infinity, and (iii) the size of each shock becomes asymptotically negligible.\footnote{They also allow for clusters of shocks. In this case, the assumption is that shocks from different clusters are independent, the number of clusters of shocks goes to infinity, and  the size of each cluster of shocks  becomes asymptotically negligible.} The fact that they developed their theory in this framework does not necessarily mean that the focus in shift-share designs should be on inference conditional on $\mathcal{L}$. As discussed by \cite{Adao}, the idea of conditioning on potential outcomes is so that they ``can allow for any correlation structure of the regression residuals across regions.''

\subsection{Design-based simulations in shift-share designs: theory} \label{Sec: DB in SS}

Consider now the shift-share design setting from Section \ref{Sec: SS setting}.  We  formalize what a design-based simulation holding $\mathbf{y}$ and $\Omega$ fixed would recover in this setting, depending on the true treatment effect $\beta$ and on the presence or absence of spatial correlation in the empirical application used to construct the DGP used in the simulations.

To this end,  consider a simpler version of the shift-share design setting from Section \ref{Sec: SS setting} in which observations $i=1,...,N$ are partitioned into equally-sized groups $\Lambda_1,...,\Lambda_F$, with $w_{if} = 1$ if $i \in \Lambda_f$, and $w_{if} = 0$ otherwise. Assume also that $\mathcal{X}_f \in \{0,1\}$ and $ \sum_{f=1}^F \mathcal{X}_f =F/2$. Potential outcomes are given by $y_i(0) = \beta_0 + \epsilon_i$ and $y_i(1) = \beta + y_i(0)$.\footnote{Given the structure of this shift-share design example, we only need to define the potential outcomes $y_i(x)$ for $x \in \{0,1\}$.} This particular shift-share design setting can be seen as a randomized experiment in which treatment is assigned at the group $\Lambda_f$ level. In this particular shift-share design setting, the standard error proposed by \cite{Adao} would coincide (up to a degrees-of-freedom correction) with standard errors clustered at the group level.  Our conclusions in this section can also be extrapolated for DiD designs, which was the setting analyzed by \cite{Bertrand04howmuch}. 

In this setting, we know that inference based on robust standard errors would be asymptotically valid when $N \rightarrow \infty$ if errors are independent across $i$. However, it would lead to over-rejection in case errors within partitions are  positive correlated. Therefore, it is natural to think that one could run design-based simulations assessing the rejection rates when inference is based on robust standard errors, in order to assess whether spatial correlation leads to relevant size distortions in specific empirical applications. We analyze whether this is actually so.

Consider design-based simulations  with random draws of $\widetilde{\mathcal{X}}_f \in \{0,1\}$ such that $\sum_{f=1}^F \widetilde{\mathcal{X}}_f =F/2$, while holding $\mathbf{y}$ fixed.  More specifically, for each draw of $\widetilde{\mathcal{X}}_f \in \{0,1\}$, we run the regression  $y_i = \gamma_0 + \gamma \tilde x_i + \tilde \epsilon_i$, where $\tilde x_i = \sum_{f=1}^F w_{if} \tilde{\mathcal{X}_f}$, yielding $\hat \gamma^b$. Then we test the null $\gamma=0$ using robust standard errors at a significance level $\alpha$, and calculate  the rejection rate in these simulations. The DGP in the simulations sets  potential outcomes as $\tilde y_i(0) = \tilde y_i(1) = y_i$ (which are fixed, given the sampling framework of the simulations) and the distribution for $\widetilde{\mathcal{X}}_f $ described above. Therefore, the null hypothesis $H_0:\gamma=0$ is true in this DGP.

Uncertainty in these simulations comes only from realizations of $ \widetilde{\mathcal{X}}_f$. Let $\mathbb{E}^\ast[.| \mathbf{y}]$ and $\mathbb{V}^\ast(. | \mathbf{y})$ denote the expectation and variance operators with respect to  this measure, conditional on $\mathbf{y}$. From Lemma 5 from \cite{IK}, $\mathbb{E}^\ast [\hat \gamma^b | \mathbf{y} ] =0$, so  the estimator $\hat \gamma^b$ is unbiased. Let $\mathbb{V}^\ast_{\mbox{\tiny true}} \equiv \mathbb{V}^\ast\left(\hat \gamma^b | \mathbf{y} \right)$ be the true variance of $\hat \gamma^b$ in these design-based simulations. Note that this is a number for a fixed $\mathbf{y}$, and a random variable depending on the errors $\epsilon_i$ when $\mathbf{y}$ is treated as a random variable. Also, let $\mathbb{V}^\ast_{\mbox{\tiny robust}}$ be the true variance in case treatment were assigned at the individual level in the design-based simulation. This is what the robust standard errors would asymptotically recover in these simulations when $F \rightarrow \infty$. 

Therefore, the ratio ${\mathbb{V}^\ast_{\mbox{\tiny robust}}}/{\mathbb{V}^\ast_{\mbox{\tiny true}}}$ is crucial to understand whether the design-based simulations would indicate or not relevant over-rejections when we assess inference using robust standard errors. Considering this framework with $
\mathbf{y}$ fixed, if ${\mathbb{V}^\ast_{\mbox{\tiny robust}}}/{\mathbb{V}^\ast_{\mbox{\tiny true}}}$ converges to a value smaller than one in a sequence with $F \rightarrow \infty$, this means that the robust variance estimator   would tend to underestimate the true variance in the simulations, generating a rejection rate larger than $\alpha$. In contrast, if ${\mathbb{V}^\ast_{\mbox{\tiny robust}}}/{\mathbb{V}^\ast_{\mbox{\tiny true}}}$ converges to one in a sequence with $F \rightarrow \infty$, then the robust variance calculated in the simulations would adequately estimate the true variance in the simulations, so we should expect rejection rates in the design-based simulations close to $\alpha$.

In this setting, we have the following result.

\begin{proposition} \label{Prop}
     Consider the shift-share design setting described in this section, and assume the vectors $\{\epsilon_i: i \in \Lambda_f\}$ are iid across $f$ with $\mathbb{E}[\epsilon_i]=0$, $\mathbb{V}(\epsilon_i)=\sigma^2$, and $cov(\epsilon_i,\epsilon_s)=\rho$ for $i \neq s$ and $i,s \in \Lambda_f$ for some $f$. Consider an asymptotic sequence in which $F \rightarrow \infty$ where we maintain that each group $\Lambda_f$ has exactly $m$ observations (so $N = m \times F$) and $\sum_{f=1}^F \mathcal{X}_f=F/2$. Then 
    \begin{eqnarray}
        \frac{\mathbb{V}^\ast_{\mbox{\tiny robust}}}{\mathbb{V}^\ast_{\mbox{\tiny true}}} \overset{a.s.}{\rightarrow} \frac{\beta^2 + 4 \sigma^2}{m\beta^2 + 4 \sigma^2 + 4(m-1)\rho}.
    \end{eqnarray}
\end{proposition}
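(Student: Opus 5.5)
The plan is to compute both variances exactly for fixed $\mathbf{y}$ using standard finite-population (Neyman) variance formulas for complete randomization, and then take almost-sure limits with the strong law of large numbers over the iid group vectors.

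In the simulation the regressor is binary with exactly half of the groups treated, so $\hat\gamma^b$ is the difference in means between pseudo-treated and pseudo-control observations. Under the simulated DGP $\tilde y_i(1)=\tilde y_i(0)=y_i$, the unit-level treatment effects are zero. Hence $\mathbb{V}^\ast_{\mbox{\tiny true}}$ is the variance of a difference in means of fixed numbers under cluster-level complete randomization of $F/2$ out of $F$ clusters. Writing $\bar y_f$ for the mean of group $f$ and $\bar y$ for the overall mean, the standard formula gives
\begin{equation*}
\mathbb{V}^\ast_{\mbox{\tiny true}} = \frac{4}{F}\cdot\frac{1}{F-1}\sum_{f=1}^F (\bar y_f-\bar y)^2 .
\end{equation*}
Analogously, $\mathbb{V}^\ast_{\mbox{\tiny robust}}$, the variance if $N/2$ of the $N$ individuals were assigned, is
\begin{equation*}
\mathbb{V}^\ast_{\mbox{\tiny robust}}=\frac{4}{N}\cdot\frac{1}{N-1}\sum_{i=1}^N (y_i-\bar y)^2 .
\end{equation*}
Both formulas follow from the Neyman variance with equal potential outcomes, or equivalently from the unbiasedness argument the paper cites from \cite{IK}. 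Since $N=mF$, the ratio is
\begin{equation*}
\frac{\mathbb{V}^\ast_{\mbox{\tiny robust}}}{\mathbb{V}^\ast_{\mbox{\tiny true}}}=\frac{1}{m}\cdot\frac{\frac{1}{N}\sum_i (y_i-\bar y)^2}{\frac{1}{F}\sum_f(\bar y_f-\bar y)^2}\,(1+o(1)).
\end{equation*}

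Next I substitute the true DGP, $y_i=\beta_0+\beta x_i+\epsilon_i$, where $x_i=\mathcal{X}_f$ is the realized (nonrandom, balanced) shock for $i\in\Lambda_f$. Then $\bar y_f=\beta_0+\beta\mathcal{X}_f+\bar\epsilon_f$.

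For the individual-level sum, expand the square. The term $\beta^2\frac1N\sum_i(x_i-\bar x)^2$ equals $\beta^2/4$ exactly by balance. The cross term $\frac{2\beta}{N}\sum_i (x_i-\bar x)\epsilon_i=\frac{2\beta}{F}\sum_f(\mathcal{X}_f-\tfrac12)\bar\epsilon_f$ is an average of independent mean-zero terms with bounded variances, so it goes to zero a.s. by Kolmogorov's SLLN. The remaining term $\frac1N\sum\epsilon_i^2-\bar\epsilon^2\to\sigma^2$ a.s., by the SLLN applied to the iid group sums $\sum_{i\in\Lambda_f}\epsilon_i^2$. The numerator therefore converges to $\beta^2/4+\sigma^2$.

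For the group-level sum, the same steps give $\beta^2/4+\mathbb{V}(\bar\epsilon_f)$, with
\begin{equation*}
\mathbb{V}(\bar\epsilon_f)=\frac{\sigma^2+(m-1)\rho}{m}.
\end{equation*}
Multiplying numerator and denominator by $4m$ then yields
\begin{equation*}
\frac{\beta^2+4\sigma^2}{m\beta^2+4\sigma^2+4(m-1)\rho}.
\end{equation*}

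The main obstacle is establishing the two exact finite-sample variance formulas, in particular the cluster-level one, with the correct $F/(F-1)$-type finite-population factors, and confirming that these factors cancel asymptotically. I would derive them directly. Write $\hat\gamma^b=\frac{2}{F}\sum_f(2\tilde{\mathcal{X}}_f-1)(\bar y_f-\bar y)$ and use $\mathrm{cov}(\tilde{\mathcal{X}}_f,\tilde{\mathcal{X}}_g)=-\frac{1}{4(F-1)}$ for $f\ne g$. Everything else is the routine SLLN argument, which only requires finite second moments since the group vectors are iid.
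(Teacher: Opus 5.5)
Your proposal is correct and follows essentially the same route as the paper: exact randomization-variance formulas under group-level and unit-level complete randomization, the decompositions $\bar y_f-\bar y=\beta s_f+(\bar\epsilon_f-\bar\epsilon)$ and $y_i-\bar y=\beta t_i+(\epsilon_i-\bar\epsilon)$, and then the strong law of large numbers applied term by term. The only difference is that you derive the variance formulas directly rather than citing lemmas. This gives you the exact Neyman factors $F(F-1)$ and $N(N-1)$ instead of the $F(F-2)$ and $N(N-2)$ written in the paper, which is immaterial because both $F(F-1)/\bigl(N(N-1)\bigr)$ and $F(F-2)/\bigl(N(N-2)\bigr)$ converge to $1/m^2$.
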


\begin{proof}
See Appendix \ref{Appendix_design_based}.    
\end{proof}

First, consider a setting with no spatial correlation, so $\rho =0$. This is a setting in which robust standard errors would be asymptotically valid when $F \rightarrow \infty$. However, Proposition \ref{Prop} shows that, with probability one, ${\mathbb{V}^\ast_{\mbox{\tiny robust}}}/{\mathbb{V}^\ast_{\mbox{\tiny true}}} \rightarrow k<1$, whenever the true DGP exhibits a true treatment effect ($\beta \neq 0$) and $m>1$. Therefore, the design-based assessment would tend to be larger than $\alpha$, (incorrectly) suggesting that robust standard errors are invalid due to spatial correlation. The intuition is that the true treatment effect $\beta$ is confounded with spatial correlation that affects the $m>1$ units within the same partition in a similar way in the DGP used in the simulations. The main problem is that the DGP used in the simulations would differ from the true DGP in a way that is relevant in determining the properties of a specific inference method.  

Consider now a setting in which we should expect $\beta=0$. In this case, with probability one,  ${\mathbb{V}^\ast_{\mbox{\tiny robust}}}/{\mathbb{V}^\ast_{\mbox{\tiny true}}} \rightarrow 1$ when $\rho =0$. Therefore, if we are in a setting with no spatial correlation (and $F$ is large),  rejection rates in this   design-based simulation would be close to $\alpha$, (correctly) suggesting that robust standard errors would be asymptotically valid. In contrast, if we are in a setting with $\rho >0$, then the rejection rates in the design-based simulations would be larger than $\alpha$, (correctly) suggesting that robust standard errors are invalid due to spatial correlation. Therefore, even though these simulations hold $\mathbf{y}$ fixed and consider only covariates as stochastic, this formalization also clarifies that, considering a setting with no treatment effect ($\beta=0$), these simulations are informative about potential problems in the errors, such as spatial correlation. A direct consequence is that, if we want to assess the prevalence of over-rejection due to spatial correlation in a specific setting, then one possibility would be to construct simulations based on a dataset in which we expect the true treatment effect to be zero. If we are analyzing a specific shift-share design empirical application, then one alternative would be to consider simulations using pre-treatment outcomes, so that we should expect similar patterns in terms of spatial correlation of the potential outcomes, but no treatment effect.

In case pre-treatment data is not available, another alternative to assess the relevance of spatial correlation when $\beta \neq 0$ is to consider $\boldsymbol{\epsilon}$-fixed (rather than $\mathbf{y}$-fixed) design-based simulations. This is how \cite{Borusyak} construct their simulations in their Appendix A.11. In this case, we consider a DGP in which we fix potential outcomes as $\dot y_i(0) = \dot y_i(1) = y_i - \hat \beta x_i$. If potential outcomes in the true model are given by  $y_i(x) = y_i +\beta x$ and $\hat \beta$ converges almost surely to $\beta$ when $F\rightarrow \infty$, then we would have simulations with a structure for the potential outcomes that is more similar to the true structure  in the application.\footnote{In the stylized setting considered in this setting, the shift-share/OLS estimator $\hat{\beta}$ converges almost surely to $\beta$ as $F \rightarrow \infty$, even allowing for within-group error correlation. Our argument, however, applies to any estimator $\hat{\beta}$ satisfying  $\hat{\beta} \xrightarrow{a.s.} \beta$ under the asymptotic sequence considered, and we therefore maintain this as a high-level assumption.} Therefore, the $\boldsymbol{\epsilon}$-fixed design-based simulations are informative about the presence of spatial correlation (when $F \rightarrow \infty$), even when we have a true treatment effect $\beta \neq 0$. We formalize this result in Appendix \ref{Appendix: epsilon}.  Interestingly, if treatment effects are heterogeneous or non-linear, then treatment effects would still generate spatially correlated errors in the design-based DGP, even after subtracting $\hat \beta x_i$ to construct the potential outcomes for the simulations. Therefore, these simulations would still be informative about problems with robust standard errors in shift-share design regressions due to heterogeneous or non-linear treatment effects.

\begin{remark}
 The results in this section analyze the properties of design-based simulations as $F \rightarrow \infty$. However, such simulations can also be informative in settings in which $F$ is not large enough for asymptotic approximations to be accurate. For example, if the standard errors proposed by \cite{Adao} or \cite{Borusyak} (which allow for spatial correlation) underestimate the true variability when the number of sectors is small, design-based simulations may shed light on the magnitude of this distortion.
Importantly, when analyzing the finite-$F$ properties of such standard errors, the issue described in Proposition \ref{Prop} is not relevant, since these procedures are valid even in the presence of spatial correlation. In this case, the fact that design-based simulations holding $\mathbf{y}$ fixed may artificially induce spatial correlation does not generate an additional problem.

\end{remark}

\begin{remark}
An additional difficulty in constructing design-based simulations in shift-share settings is that the true distribution of the shocks is generally unknown. This provides a further reason why the DGP considered in such simulations may differ from the true DGP. In Section \ref{Sec: SS illustrations}, following \cite{Adao}, we consider simulations in which shocks are normally distributed with mean zero and homogeneous variance.

\end{remark}

\section{Simulations}
\label{Sec: simulations}

Appendix \ref{Appendix_permutation_shocks} presents simulations illustrating the use of design-based simulations to detect spatial correlation. We consider DGPs that follow the structure of the shift-share design setting analyzed in Section \ref{Sec: DB in SS}. The simulations confirm the novel insights presented in Section \ref{Sec: DB in SS}: (i) that $\mathbf{y}$-fixed design-based simulations may incorrectly indicate spatial correlation when the true treatment effect is different from zero; (ii) that $\boldsymbol{\epsilon}$-fixed design-based simulations circumvent this problem; (iii) that $\mathbf{y}$-fixed design-based simulations can be informative about spatial correlation when $\beta = 0$ (for example, when simulations are based on a placebo specification); and (iv) that $\boldsymbol{\epsilon}$-fixed design-based simulations are informative about inference problems when treatment effects are heterogeneous.

Interestingly, these simulations also highlight an additional insight about the use of design-based simulations. While testing significance in a pre-treatment outcome specification can be informative not only about the possibility of pre-trends but also about spatial correlation \citep{Ferman_DID}, design-based simulations based on pre-treatment outcomes can be even more informative about spatial correlation problems. The intuition is the following. Consider the stylized model from Section \ref{Sec: DB in SS}, and assume there are group-level iid shocks that take values $-\omega$ or $+\omega$ with equal probabilities. Let $\tau_1$ ($\tau_0$) denote the proportion of positive shocks among treated (control) groups. Ignoring idiosyncratic shocks, the pre-treatment outcome regression rejects the null when $|\tau_1-\tau_0|$ is sufficiently large, as this generates a large point estimate. Thus, rejection requires not only variability in the signs of the shocks, but also an unbalanced allocation of those shocks between treated and control groups. In contrast, design-based simulations analyze rejection rates across all possible treatment assignments rather than only the realized one. As a result, the simulations can flag spatial correlation problems whenever there is sufficient variation in the shocks—so that some treatment assignments would generate large values of $|\tau_1-\tau_0|$—even if $|\tau_1-\tau_0|$ is small for the realized treatment assignment.

\section{Empirical illustrations} \label{Sec: SS illustrations}

We illustrate the use of design-based simulations in  three shift-share design empirical applications, based on \cite{Autor}, \cite{Dix}, and \cite{Acemoglu}. 

\subsection{Analyzing distortions due to spatial correlation}

We consider first the use design-based simulations to assess the performance of cluster-robust standard errors in these applications. We know that cluster-robust standard errors might be problematic in this setting if (i) there is spatial correlation in the errors (beyond the clusters), and/or (ii) there are few clusters. \cite{Ferman_assessment} show that, in the absence of spatial correlation, asymptotic approximations for cluster-robust standard errors are relatively less reliable for the weighted OLS specifications in these applications. Since the focus in this section is to understand how different design-based simulations  may be used to detect spatial-correlation problems, we therefore focus on the unweighted specifications. We emphasize, though, that those simulations can potentially be used to detect both of these problems.\footnote{In order to distinguish between these two problems, one could also run simulations with both errors and shocks stochastic. In this case,  we would consider a DGP in which errors are independent, so these simulations  would be informative about the first problem, but not about the second one.} 

We consider first design-based simulations with $\mathbf{y}$ fixed and stochastic shocks. We find large rejection rates in columns 1, 3 and 5 of Table \ref{Table_SS}, ranging from 34\% to 70\%. However, as discussed in Section \ref{Sec: DB in SS}, a true treatment effect of ${x}_i$ may be confounded with spatially-correlated errors in these simulations. Therefore, we could find large rejection rates in these simulations, even when errors are \emph{not} spatially correlated.

We consider the two alternatives discussed in Section \ref{Sec: DB in SS} to assess whether spatial correlation is a relevant concern for cluster-robust standard errors in these applications. All results are presented in Table \ref{Table_SS}. For the application in \cite{Autor}, our results reinforce the conclusions of \cite{Adao} that spatial correlation leads to substantial over-rejection in this setting. At the same time, the over-rejection we find with these alternative design-based simulations is slightly less severe than that obtained with $\mathbf{y}$-fixed design-based simulations. This is consistent with the theoretical results presented in Section \ref{Sec: DB in SS}, where we show that $\mathbf{y}$-fixed design-based simulations tend to overstate the magnitude of spatial correlation when there is a true treatment effect.

For \cite{Acemoglu}, the $\boldsymbol{\epsilon}$-fixed design-based simulations remains large, but the design-based simulations using a placebo outcome is close to 5\%. This provides evidence that inference based on cluster-robust standard errors might be reasonable for testing a sharp null of no effect whatsoever, but that we may have relevant heterogeneous treatment effects. As we discuss in Appendix \ref{Appendix_SS_applications}, the simulations using the pre-treatment outcome would have a relatively high probability of flagging problems due to spatial correlation in this application.

Finally, for \cite{Dix}, both alternative simulations lead to rejection rates smaller than 5\%, providing some indication that spatial correlation is not a problem in this  application. An important caveat, however, is that those simulations would have a relatively lower probability of flagging a problem in case there is relevant spatial correlation in this application (details in Appendix \ref{Appendix_SS_applications}). An alternative in cases like that may be to run simulations for a number of pre-treatment outcomes, if available. 

\subsection{Assessing the reliability of asymptotic approximations}
\label{Sec: assessing approximations}

We can also use design-based simulations to assess whether the new inference methods proposed for shift-share designs provide good asymptotic approximations in these settings, given the structure of shares.  We find evidence that these inference methods work well in \cite{Autor}. However, for the other two applications the design-based simulations suggest that these inference methods can lead to large distortions, with rejection rates up to 57\% (Table \ref{Table_SS}, Panel C). This is consistent with these applications having a smaller number of sectors (details in Table \ref{Table_SS}). Note that it is not a problem to consider  $\mathbf{y}$-fixed design-based simulations in this case, because we are evaluating inference methods that allow for spatial correlation. Therefore, even if a true treatment effect is confounded with spatial correlation in the simulations, this would not be a problem in this case.

\subsection{Choosing among alternative inference methods}

The results in this section illustrate that it is not trivial to determine which inference methods are more reliable in specific shift-share design applications, and that design-based simulations may be used by applied researchers to inform on that. 

If we have evidence that the methods from \cite{Adao}/\cite{Borusyak} provide accurate asymptotic approximations in a given application, then they should be preferred, as they impose fewer restrictions on the spatial correlation. Therefore, our recommendation is to start with design-based simulations to assess whether inference based on their methods performs well in the application at hand, as we illustrate in Section \ref{Sec: assessing approximations}. This is the case for the application from \cite{Autor}.

In case the design-based simulations suggest these new inference methods are unreliable, then the next step should be to use one of the alternative design-based simulations we propose (design-based simulations on a placebo outcome and/or $\boldsymbol{\epsilon}$-fixed design-based simulations) to evaluate whether inference based on cluster-robust standard errors would be reliable. These design-based simulations  would be informative about two potential problems with cluster-robust standard errors in this setting: (i) in case the number of clusters is not large enough, and (ii) in case it is unreasonable to assume that there is no relevant spatial correlation. In our empirical illustrations,  we find evidence that inference based on cluster-robust standard errors is relatively more reliable than inference based on these new methods for the applications from \cite{Acemoglu} (for testing a sharp null), and from \cite{Dix} (with the caveat that the design-based simulations have lower probabilities of detecting problems in this application).

The \emph{placebo} specifications in these two applications provide further evidence on the conclusion that, in these applications, cluster-robust standard errors are relatively more reliable: in both cases, we would reject the null for their \emph{placebo} specifications if we consider \cite{Adao} standard errors, while we do not reject these nulls if we use cluster-robust standard errors (Appendix Table \ref{Table_appendix}, columns 4 and 6). Importantly, if applied researchers consider standard (fixed-$\mathbf{y}$) design-based simulations resampling shocks in these last two applications (instead of the alternatives we propose), they might incorrectly conclude that cluster-robust standard errors are less reliable than the new inference methods in these two applications. 

In case we have evidence that none of these methods are reliable, then one would have to consider other alternatives. For example, \cite{Borusyak2} and \cite{Alvarez} consider the use of randomization inference tests for shift-share designs, and show conditions in which these tests are exact in finite samples even when we allow for unrestricted spatial correlation. However, these tests rely on relatively strong assumptions on the shock assignment mechanism (such as correct specification of the distribution of shocks or exchangeability) for this finite-sample validity. Applied researcher would then have to analyze whether those are reasonable assumptions in their settings.

\section{Conclusion}

Design-based simulations are widely used in both methodological and applied work to study the finite-sample properties of inference procedures and to evaluate their performance in specific empirical applications. This paper shows that the interpretation of such simulations requires care. Because they construct a DGP by holding realized outcomes fixed, design-based simulations may fail to replicate key features of the true DGP—particularly when treatment effects are present or heterogeneous. In shift-share designs, we show that standard $\mathbf{y}$-fixed simulations can confound treatment effects with spatial correlation, potentially overstating inference distortions.

At the same time, design-based simulations can be informative when properly constructed. We clarify the conditions under which they reveal meaningful features of the error structure and propose alternative simulation designs that better isolate the relevant sources of distortion. Our empirical illustrations demonstrate that the choice of simulation design can materially affect conclusions about the reliability of competing inference methods.

Overall, the usefulness of design-based simulations depends on how closely the simulated DGP aligns with the true one along dimensions that matter for inference. Careful construction and interpretation are therefore essential.


\singlespace
\bibliographystyle{chicago}
\bibliography{bib}

\begin{table}

\begin{footnotesize}

\begin{center}
\caption{{\bf Design-based Simulations for shift-share designs}} \label{Table_SS}
\begin{tabular}{lcccccccc}
\cline{1-9}
\cline{1-9}

&\multicolumn{2}{c}{China shock} & & \multicolumn{2}{c}{Exposure to robots} & & \multicolumn{2}{c}{Trade liberalization} \\ \cline{2-3} \cline{5-6} \cline{8-9} 

& Main effects & Placebo && Main effects & Placebo & & Main effects & Placebo \\

 & (1) & (2) & & (3) & (4) & & (5) & (6) \\ 
 
\cline{1-9}

\multicolumn{9}{c}{Panel A: $\mathbf{y}$-fixed design-based simulations for cluster-robust standard errors} \\
 \\

Cluster-robust & 0.412 & 0.237 & & 0.330 & 0.059 & & 0.313 & 0.001 \\
 \\

\multicolumn{9}{c}{Panel B: $\boldsymbol{\epsilon}$-fixed design-based simulations for cluster-robust standard errors} \\
\\
Cluster-robust & 0.317 & 0.226 & & 0.256 & 0.049 & & 0.012 & 0.000 \\
 \\

\multicolumn{9}{c}{Panel C: $\mathbf{y}$-fixed  design-based simulations for new inference methods} \\
\\
\cite{Adao} & 0.076 & 0.098 & & 0.355 & 0.221 & & 0.547 & 0.511 \\
 \\
\cite{Adao} & 0.042 & 0.041 & & 0.296 & 0.107 & & 0.220 & 0.116 \\

(null imposed) \\
\\
 
\# of clusters & 48 & 48 & & 48 & 48 & & 91 & 91 \\
 
\# of observations & 772 & 772 & & 722 & 722 & & 411 & 411 \\
 
\# of sectors & 395 & 395 & & 19 & 19 & & 20 & 20 \\

\cline{1-9}
\cline{1-9}

\end{tabular}

\end{center}

\end{footnotesize}

\footnotesize{ Notes: this table presents different types of design-based simulations for the shift-share applications analyzed in Section \ref{Sec: SS illustrations}. We consider both specifications with the main outcome of interest in the original papers (columns 1, 3, and 5), and in which the outcome variable is a placebo (columns 2, 4, and 6). Panels A and C consider design-based simulations holding $\mathbf{y}$ constant. In Panel A we assess inference based on cluster-robust standard errors with HC3 correction, while in Panel C we assess the new inference methods proposed by \cite{Adao} (both the standard version and the version with the null imposed). In Panel B, we consider  $\boldsymbol{\epsilon}$-fixed design-based simulations to assess inference based on cluster-robust standard errors with HC3 correction. In all cases, design-based simulations are based on random draws of iid standard normal shocks, where we calculate the rejection rate for a 5\%-level test. In column 1, we present results from the specifications considered by \cite{Adao} in their Section II, which is based on the application from \cite{Autor}. In column 2, we present the same results, but using a pre-treatment outcome (variation from 1980 to 1990). In columns 3 and 4 we present the assessments for a specification for the main effects and for a placebo specification from \cite{Acemoglu}. Finally, in columns 5 and 6 we present a specification for the main effects and a placebo specification from \cite{Acemoglu}. In all cases, we consider unweighted OLS regressions. }

\end{table}


\pagebreak

\appendix

\setcounter{table}{0}
\renewcommand\thetable{A.\arabic{table}}

\setcounter{figure}{0}
\renewcommand\thefigure{A.\arabic{figure}}

\section{Appendix}

\onehalfspacing

\subsection{Proof of Proposition \ref{Prop}}
\label{Appendix_design_based}

\begin{proof}

Let $\bar\epsilon_f:=m^{-1}\sum_{i\in\Lambda_f}\epsilon_i$ and
$\bar\epsilon:=N^{-1}\sum_{i=1}^N\epsilon_i=F^{-1}\sum_{f=1}^F\bar\epsilon_f$.
Let $\mathcal{T}\subset\{1,\dots,F\}$ denote the set of treated groups in the original data ($\mathcal{X}_f=1$), with $|\mathcal{T}|=F/2$.
Write $s_f:=\mathbf{1}\{f\in\mathcal{T}\}-\tfrac12\in\{-\tfrac12,\tfrac12\}$ and
$t_i:=\mathbf{1}\{i\in\Lambda_f \text{ for some } f\in\mathcal{T}\}-\tfrac12\in\{-\tfrac12,\tfrac12\}$.
Note that $\sum_{f=1}^F s_f=0$, $\sum_{f=1}^F s_f^2=F/4$, $\sum_{i=1}^N t_i=0$, and $\sum_{i=1}^N t_i^2=N/4$.

By Lemma 5 and Lemma 2 of \cite{IK}, we can write
\begin{align}
\mathbb{V}^\ast_{\mbox{\tiny true}} 
&= \frac{4}{F(F-2)}\sum_{f=1}^F \Big(\beta s_f + (\bar\epsilon_f-\bar\epsilon)\Big)^2, \label{eq:Vtrue}\\
\mathbb{V}^\ast_{\mbox{\tiny robust}} 
&= \frac{4}{N(N-2)}\sum_{i=1}^N \Big(\beta t_i + (\epsilon_i-\bar\epsilon)\Big)^2. \label{eq:Vrob}
\end{align}

Now note that: \[
\frac{1}{F}\sum_{f=1}^F\!\Big(\beta s_f + (\bar\epsilon_f-\bar\epsilon)\Big)^2
= \beta^2\Big(\frac{1}{F}\sum_{f=1}^F s_f^2\Big)
+ \frac{2\beta}{F}\sum_{f=1}^F s_f(\bar\epsilon_f-\bar\epsilon)
+ \frac{1}{F}\sum_{f=1}^F(\bar\epsilon_f-\bar\epsilon)^2.
\]

Since $|\mathcal{T}|=F/2$, $\tfrac{1}{F}\sum s_f^2=\tfrac14$ exactly. For the cross term,
\[
\frac{1}{F}\sum_{f=1}^F s_f(\bar\epsilon_f-\bar\epsilon)
= \frac{1}{2F}\!\sum_{f\in\mathcal{T}}\!\bar\epsilon_f
  - \frac{1}{2F}\!\sum_{f\notin\mathcal{T}}\!\bar\epsilon_f
  - \bar\epsilon\cdot \frac{1}{F}\sum_{f=1}^F s_f
= \frac14\Big(\overline{\bar\epsilon}_{\,\mathcal{T}}-\overline{\bar\epsilon}_{\,\mathcal{T}^c}\Big),
\]
where $\overline{\bar\epsilon}_{\,\mathcal{T}}=(F/2)^{-1}\sum_{f\in\mathcal{T}}\bar\epsilon_f$ and similarly for $\mathcal{T}^c$.
Because $\{\bar\epsilon_f\}_f$ are i.i.d. with $\mathbb{E}[\bar\epsilon_f]=0$ and
$\mathbb{E}[|\bar\epsilon_f|]<\infty$, the strong law of large numbers (SLLN) gives
$\overline{\bar\epsilon}_{\,\mathcal{T}} \overset{a.s.}{\rightarrow} 0$ and $\overline{\bar\epsilon}_{\,\mathcal{T}^c} \overset{a.s.}{\rightarrow} 0$.
For the last term,
\[
\frac{1}{F}\sum_{f=1}^F(\bar\epsilon_f-\bar\epsilon)^2
= \frac{1}{F}\sum_{f=1}^F \bar\epsilon_f^{\,2} - \bar\epsilon^{\,2}
\;\overset{a.s.}{\rightarrow}\; \mathbb{E}[\bar\epsilon_f^{\,2}] - 0
= \mathbb{V}[\bar\epsilon_f]=\frac{\sigma^2 + (m-1)\rho}{m},
\]
using the SLLN for the i.i.d.\ sequence $\{\bar\epsilon_f^2\}_f$ and $\bar\epsilon \overset{a.s.}{\rightarrow} 0$.
Hence,
\begin{equation}
\frac{1}{F}\sum_{f=1}^F\Big(\beta s_f + (\bar\epsilon_f-\bar\epsilon)\Big)^2
\;\overset{a.s.}{\rightarrow}\; \frac{\beta^2}{4} + \frac{\sigma^2+(m-1)\rho}{m}.
\label{eq:limit-F}
\end{equation}

Analogously, for the unit-level expression,
\[
\frac{1}{N}\sum_{i=1}^N\Big(\beta t_i + (\epsilon_i-\bar\epsilon)\Big)^2
= \beta^2\Big(\frac{1}{N}\sum_{i=1}^N t_i^2\Big)
+ \frac{2\beta}{N}\sum_{i=1}^N t_i(\epsilon_i-\bar\epsilon)
+ \frac{1}{N}\sum_{i=1}^N(\epsilon_i-\bar\epsilon)^2.
\]
Again $\frac{1}{N}\sum t_i^2=\frac14$ exactly. The cross term equals
$\frac14(\bar\epsilon_{\mathrm{tr}}-\bar\epsilon_{\mathrm{ct}})$, where
$\bar\epsilon_{\mathrm{tr}}=(N/2)^{-1}\sum_{i:\,t_i=1/2}\epsilon_i$ and
$\bar\epsilon_{\mathrm{ct}}=(N/2)^{-1}\sum_{i:\,t_i=-1/2}\epsilon_i$; by the SLLN, both converge a.s.\ to $0$.
Finally,
$\frac{1}{N}\sum_{i=1}^N(\epsilon_i-\bar\epsilon)^2
= \frac{1}{N}\sum_{i=1}^N \epsilon_i^2 - \bar\epsilon^{\,2}\overset{a.s.}{\rightarrow} \sigma^2.$
Therefore,
\begin{equation}
\frac{1}{N}\sum_{i=1}^N\Big(\beta t_i + (\epsilon_i-\bar\epsilon)\Big)^2
\;\overset{a.s.}{\rightarrow}\; \frac{\beta^2}{4} + \sigma^2.
\label{eq:limit-N}
\end{equation}

Combining Equations \ref{eq:limit-F} and \ref{eq:limit-N},
\[
\frac{\mathbb{V}^\ast_{\mbox{\tiny robust}} }{\mathbb{V}^\ast_{\mbox{\tiny true}} }
= \frac{F-2}{N-2}\cdot
\frac{\frac{1}{N}\sum_{i=1}^N (\beta t_i + (\epsilon_i-\bar\epsilon))^2}
     {\frac{1}{F}\sum_{f=1}^F (\beta s_f + (\bar\epsilon_f-\bar\epsilon))^2}
\;\overset{a.s.}{\rightarrow}\;
\frac{1}{m}\cdot \frac{\frac{\beta^2}{4}+\sigma^2}{\frac{\beta^2}{4}+\frac{\sigma^2+(m-1)\rho}{m}}
= \frac{\beta^2+4\sigma^2}{m\beta^2+4\sigma^2 + 4(m-1)\rho}.
\] 

\end{proof}

\subsection*{A.2 Properties of the $\epsilon$-fixed design-based simulations} \label{Appendix: epsilon}

We consider the same shift-share design setting and asymptotic sequence described in Section~3.2. In particular, observations are partitioned into equally-sized groups $\Lambda_1,\dots,\Lambda_F$ with $|\Lambda_f|=m$, and potential outcomes satisfy $y_i(0)=\beta_0+\epsilon_i$ and $y_i(1)=y_i(0)+\beta$, with the dependence structure described in Proposition~3.1.

We now analyze the $\epsilon$-fixed design-based simulations discussed in Section~3.2. Let $\hat\beta$ denote an estimator computed from the original sample, and define the $\epsilon$-fixed outcome
\[
\dot y_i \equiv y_i - \hat\beta x_i.
\]
In the simulations, we hold $\dot y = (\dot y_1,\dots,\dot y_N)$ fixed and draw $\{\tilde X_f\}_{f=1}^F$ uniformly over $\{0,1\}^F$ subject to $\sum_{f=1}^F \tilde X_f = F/2$. For each draw, we construct $\tilde x_i = \tilde X_f$ for $i \in \Lambda_f$ and run the regression
\[
\dot y_i = \gamma_0 + \gamma \tilde x_i + \tilde\epsilon_i,
\]
obtaining $\hat\gamma^{\,b}$.

Let $\mathbb{E}^{**}[\cdot \mid \dot y]$ and $\mathrm{Var}^{**}(\cdot \mid \dot y)$ denote expectation and variance with respect to the randomization distribution induced by the draws of $\{\tilde X_f\}_{f=1}^F$, conditional on $\dot y$. Let
\[
V^{**}_{\text{true}} \equiv \mathrm{Var}^{**}(\hat\gamma^{\,b} \mid \dot y)
\]
denote the true randomization variance in these simulations, and let $V^{**}_{\text{robust}}$ denote the variance if we had individual-level assignment. Again, we want to study how the ratio between $V^{**}_{\text{robust}}$ and $V^{**}_{\text{true}}$ behaves as a function of the true treatment effect and of the spatial correlation. We state as a high-level assumption that $\hat\beta \xrightarrow{a.s.} \beta$ as $F\to\infty$, as our results are valid whenever we have an estimator satisfying this property. We emphasize that this property is satisfied if we consider the shift-share/OLS estimator in this setting.

\begin{proposition}
Consider the same setting and assumptions as in Proposition~3.1. Assume in addition that
$\hat\beta \xrightarrow{a.s.} \beta$ as $F\to\infty$ (equivalently, $N\to\infty$). Then
\[
\frac{V^{**}_{\emph{robust}}}{V^{**}_{\emph{true}}}
\;\xrightarrow{a.s.}\;
\frac{\sigma^2}{\sigma^2+(m-1)\rho},
\]
which does not depend on $\beta$.
\end{proposition}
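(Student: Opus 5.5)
The plan is to mirror the proof of Proposition~\ref{Prop}, with $y_i$ replaced by $\dot y_i$. We write the fixed outcomes in the simulations as
\[
\dot y_i = \beta_0 + \epsilon_i + (\beta-\hat\beta)x_i = \beta_0 + \epsilon_i + (\beta-\hat\beta)\left(t_i + \tfrac12\right).
\]
So $\dot y$ has exactly the structure used in the proof of Proposition~\ref{Prop}, except that $\beta$ is replaced by the (data-dependent but fixed in the simulation) scalar $\delta_F:=\beta-\hat\beta$; the intercept shift is irrelevant. Since the randomization distribution of $\{\tilde X_f\}$ is the same as before, Lemmas 5 and 2 of \cite{IK} apply verbatim conditional on $\dot y$. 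They give
\begin{align*}
V^{**}_{\text{true}} &= \frac{4}{F(F-2)}\sum_{f=1}^F\Big(\delta_F s_f+(\bar\epsilon_f-\bar\epsilon)\Big)^2,\\
V^{**}_{\text{robust}} &= \frac{4}{N(N-2)}\sum_{i=1}^N\Big(\delta_F t_i+(\epsilon_i-\bar\epsilon)\Big)^2,
\end{align*}
which are equations \eqref{eq:Vtrue}--\eqref{eq:Vrob} with $\beta$ replaced by $\delta_F$.

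Next, expand each average into three terms as in the appendix. The pure-error terms are handled exactly as before by the SLLN:
\[
\frac1F\sum_f(\bar\epsilon_f-\bar\epsilon)^2\to\frac{\sigma^2+(m-1)\rho}{m},
\qquad
\frac1N\sum_i(\epsilon_i-\bar\epsilon)^2\to\sigma^2
\]
almost surely. The squared-effect terms equal $\delta_F^2/4$ exactly, and this tends to $0$ almost surely by the assumption $\hat\beta\xrightarrow{a.s.}\beta$. The cross terms equal $\tfrac{2\delta_F}{4}(\overline{\bar\epsilon}_{\mathcal T}-\overline{\bar\epsilon}_{\mathcal T^c})$ and its unit-level analogue. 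Each is a product of two almost surely vanishing sequences, so each tends to $0$ almost surely. This gives
\[
\frac1F\sum_f\Big(\delta_F s_f+\bar\epsilon_f-\bar\epsilon\Big)^2\xrightarrow{a.s.}\frac{\sigma^2+(m-1)\rho}{m},
\qquad
\frac1N\sum_i\Big(\delta_F t_i+\epsilon_i-\bar\epsilon\Big)^2\xrightarrow{a.s.}\sigma^2.
\]
Finally, the ratio is written as $\frac{F-2}{N-2}$ times the ratio of these averages. Since $\frac{F-2}{N-2}\to 1/m$, the ratio tends to
\[
\frac1m\cdot\frac{\sigma^2}{(\sigma^2+(m-1)\rho)/m}=\frac{\sigma^2}{\sigma^2+(m-1)\rho}.
\]
This requires the denominator limit to be positive, which holds because $\mathbb V(\bar\epsilon_f)>0$ in the nondegenerate case.

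The main point of care is that $\hat\beta$ is computed from the same $\epsilon$'s, so $\delta_F$ is not independent of the errors. The argument must therefore avoid anything that conditions on $\delta_F$ being fixed when invoking the SLLN. This is handled by working on a single probability-one event: the one on which $\hat\beta\to\beta$ and all the SLLN limits hold simultaneously, which is an intersection of countably many probability-one events. Pathwise limits of products and sums then apply, so no independence is required. The exact \cite{IK} variance formulas hold for any fixed vector $\dot y$, so the dependence of $\dot y$ on $\hat\beta$ does not affect that step either.
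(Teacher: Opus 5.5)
Your proposal is correct and follows the paper's proof essentially step for step: write $\dot y_i=\beta_0+\epsilon_i+(\beta-\hat\beta)x_i$, reuse the randomization-variance formulas with $\beta$ replaced by $\beta-\hat\beta$, and send the squared and cross terms to zero using $\hat\beta\to\beta$ a.s.\ together with the SLLN limits, all on a common probability-one event. Your normalization $\frac{F-2}{N-2}\to 1/m$ in front of the ratio of averages is the right one; the appendix's factor $\frac{F(F-2)}{N(N-2)}\to 1/m^2$ is a slip. You could also strengthen the cross-term step: you only need $\frac1F\sum_f s_f(\bar\epsilon_f-\bar\epsilon)$ and its unit-level analogue to be a.s.\ bounded. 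This follows from Cauchy--Schwarz, since each is at most $\tfrac12$ times the square root of an a.s.\ convergent average, so no SLLN over the $F$-dependent sets $\mathcal T$ is needed.
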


\begin{proof}
Write the realized outcome in the original sample as
$y_i=\beta_0+\beta x_i+\epsilon_i$, with $x_i=X_f$ for $i\in\Lambda_f$.
Define $\Delta\equiv \beta-\hat\beta$, so that the $\epsilon$-fixed outcome can be written as
\[
\dot y_i = y_i-\hat\beta x_i
= \beta_0 + \epsilon_i + \Delta x_i.
\]
Let $T\subset\{1,\ldots,F\}$ denote the set of treated groups in the original data
(i.e.\ $f\in T$ iff $X_f=1$), so $|T|=F/2$, and define the group-level indicators
$s_f\equiv \mathbf{1}\{f\in T\}-1/2\in\{-1/2,1/2\}$ and
$t_i\equiv \mathbf{1}\{i\in\Lambda_f\text{ for some }f\in T\}-1/2\in\{-1/2,1/2\}$.
Also let $\bar\epsilon_f\equiv m^{-1}\sum_{i\in\Lambda_f}\epsilon_i$ and
$\bar\epsilon\equiv N^{-1}\sum_{i=1}^N\epsilon_i = F^{-1}\sum_{f=1}^F \bar\epsilon_f$.

By the same randomization-variance algebra used in Proposition~3.1 (Lemma~5 and Lemma~2 of
Barrios et al.\ (2012) applied to the regression of $\dot y$ on $\tilde x$), we can write
\[
V^{**}_{\mathrm{true}}
=
\frac{4}{F(F-2)}\sum_{f=1}^F\Big(\Delta s_f + (\bar\epsilon_f-\bar\epsilon)\Big)^2,
\qquad
V^{**}_{\mathrm{robust}}
=
\frac{4}{N(N-2)}\sum_{i=1}^N\Big(\Delta t_i + (\epsilon_i-\bar\epsilon)\Big)^2.
\tag{A.1}
\]
(These are exactly the expressions in Proposition~3.1 with $\beta$ replaced by $\Delta$.)

We analyze the limits of the two quadratic averages in (A.1).
First, for the group-level term,
\[
\frac{1}{F}\sum_{f=1}^F\Big(\Delta s_f + (\bar\epsilon_f-\bar\epsilon)\Big)^2
=
\Delta^2\Big(\frac{1}{F}\sum_{f=1}^F s_f^2\Big)
+2\Delta\Big(\frac{1}{F}\sum_{f=1}^F s_f(\bar\epsilon_f-\bar\epsilon)\Big)
+\frac{1}{F}\sum_{f=1}^F(\bar\epsilon_f-\bar\epsilon)^2.
\tag{A.2}
\]
Because $|T|=F/2$, we have $\frac{1}{F}\sum_{f=1}^F s_f^2=1/4$ exactly.
Moreover, $\{\bar\epsilon_f\}_{f=1}^F$ are i.i.d.\ across $f$, so by the SLLN,
$\bar\epsilon \xrightarrow{a.s.} 0$ and
\[
\frac{1}{F}\sum_{f=1}^F(\bar\epsilon_f-\bar\epsilon)^2
=
\frac{1}{F}\sum_{f=1}^F \bar\epsilon_f^{\,2}-\bar\epsilon^{\,2}
\xrightarrow{a.s.}\ \mathbb{E}[\bar\epsilon_f^{\,2}]
=\operatorname{Var}(\bar\epsilon_f)
=\frac{\sigma^2+(m-1)\rho}{m}.
\tag{A.3}
\]
For the cross average, note that
\[
\frac{1}{F}\sum_{f=1}^F s_f(\bar\epsilon_f-\bar\epsilon)
=
\frac{1}{F}\sum_{f=1}^F s_f\bar\epsilon_f
-\bar\epsilon\Big(\frac{1}{F}\sum_{f=1}^F s_f\Big)
=
\frac{1}{F}\sum_{f=1}^F s_f\bar\epsilon_f,
\]
since $\sum_{f=1}^F s_f=0$. Because $\{s_f\bar\epsilon_f\}_{f=1}^F$ are i.i.d.\ across $f$,
with $\mathbb{E}[s_f\bar\epsilon_f]=0$ and $\mathbb{E}[|s_f\bar\epsilon_f|]<\infty$,
the SLLN implies
\[
\frac{1}{F}\sum_{f=1}^F s_f(\bar\epsilon_f-\bar\epsilon)
\xrightarrow{a.s.} 0.
\tag{A.4}
\]
Since $\Delta=\beta-\hat\beta\xrightarrow{a.s.}0$ by assumption, the first two terms on the
right-hand side of (A.2) converge to zero almost surely, while the last term converges
almost surely to $(\sigma^2+(m-1)\rho)/m$ by (A.3). Therefore,
\[
\frac{1}{F}\sum_{f=1}^F\Big(\Delta s_f + (\bar\epsilon_f-\bar\epsilon)\Big)^2
\xrightarrow{a.s.} \frac{\sigma^2+(m-1)\rho}{m}.
\tag{A.5}
\]

Second, for the unit-level term,
\[
\frac{1}{N}\sum_{i=1}^N\Big(\Delta t_i + (\epsilon_i-\bar\epsilon)\Big)^2
=
\Delta^2\Big(\frac{1}{N}\sum_{i=1}^N t_i^2\Big)
+2\Delta\Big(\frac{1}{N}\sum_{i=1}^N t_i(\epsilon_i-\bar\epsilon)\Big)
+\frac{1}{N}\sum_{i=1}^N(\epsilon_i-\bar\epsilon)^2.
\tag{A.6}
\]
Because exactly half the units are in treated groups, $\frac{1}{N}\sum_{i=1}^N t_i^2=1/4$
exactly. By the SLLN (or standard arguments under the assumed dependence structure),
$\bar\epsilon \xrightarrow{a.s.} 0$ and
\[
\frac{1}{N}\sum_{i=1}^N(\epsilon_i-\bar\epsilon)^2
=
\frac{1}{N}\sum_{i=1}^N\epsilon_i^2-\bar\epsilon^{\,2}
\xrightarrow{a.s.}\ \mathbb{E}[\epsilon_i^2]=\sigma^2.
\tag{A.7}
\]
For the cross average, note that
\[
\frac{1}{N}\sum_{i=1}^N t_i(\epsilon_i-\bar\epsilon)
=
\frac{1}{N}\sum_{i=1}^N t_i\epsilon_i
-\bar\epsilon\Big(\frac{1}{N}\sum_{i=1}^N t_i\Big)
=
\frac{1}{N}\sum_{i=1}^N t_i\epsilon_i,
\]
since $\sum_{i=1}^N t_i=0$. Moreover,
\[
\frac{1}{N}\sum_{i=1}^N t_i\epsilon_i
=
\frac{1}{F}\sum_{f=1}^F s_f\bar\epsilon_f,
\]
because $t_i=s_f$ for all $i\in\Lambda_f$ and $\sum_{i\in\Lambda_f}\epsilon_i=m\bar\epsilon_f$.
By the same argument used above, the SLLN implies
\[
\frac{1}{N}\sum_{i=1}^N t_i(\epsilon_i-\bar\epsilon)
\xrightarrow{a.s.} 0.
\tag{A.8}
\]
Since $\Delta\xrightarrow{a.s.}0$, the first two terms on the right-hand side of (A.6)
converge to zero almost surely. Thus,
\[
\frac{1}{N}\sum_{i=1}^N\Big(\Delta t_i + (\epsilon_i-\bar\epsilon)\Big)^2
\xrightarrow{a.s.} \sigma^2.
\tag{A.9}
\]

Finally, plug (A.5) and (A.9) into (A.1) and use $N=mF$ and
$\frac{F(F-2)}{N(N-2)}\to 1/m^2$ to obtain
\[
\frac{V^{**}_{\mathrm{robust}}}{V^{**}_{\mathrm{true}}}
=
\frac{F(F-2)}{N(N-2)}\cdot
\frac{\frac{1}{N}\sum_{i=1}^N\big(\Delta t_i + (\epsilon_i-\bar\epsilon)\big)^2}
{\frac{1}{F}\sum_{f=1}^F\big(\Delta s_f + (\bar\epsilon_f-\bar\epsilon)\big)^2}
\xrightarrow{a.s.}
\frac{1}{m^2}\cdot
\frac{\sigma^2}{\frac{\sigma^2+(m-1)\rho}{m}}
=
\frac{\sigma^2}{\sigma^2+(m-1)\rho}.
\]
\end{proof}

\subsection{Simulations}
\label{Appendix_permutation_shocks}

We present simulations that illustrate the use to design-based simulations for assessing whether inference is distorted due to spatial correlation. The simulations are based on the simpler shift-share design setting presented in Section \ref{Sec: DB in SS}. Let $Y_{is}$ be the outcome for individual $i$ in state $s$. We have $N$ states, where half of them receive a treatment $T_s$. Each state has $10$ individuals. We consider simulations in which $Y_{is}(0) = \omega \xi_s + \epsilon_{is}$ and $Y_{is}(1) = \beta + Y_{is}(0)$, where $\epsilon_{is}$ is iid $N(0,1)$ across $i$ and $s$, and $ \xi_s$ is iid $N(0,1)$ across $s$. The parameter $\beta$ reflects the true treatment effect, while $\omega$ reflects the relevance of state-level shocks. Except for the last set of simulations, $\beta$ is a fixed parameter in the simulations (so treatment effect is homogeneous). 

The goal is to construct design-based simulations to evaluate whether state-level shocks impose relevant distortions for inference based on robust standard errors. For each realization of the potential outcomes $\{Y_{is}(0),Y_{is}(1)\}$, we conduct $\mathbf{y}$-fixed and $\boldsymbol{\epsilon}$-fixed simulations based on permutations of $(T_1,\hdots,T_N)$. For each permutation we estimate $\hat \beta^p$ and conduct inference using robust standard errors, and then calculate the proportion of times in which we would reject the null. Let $\Gamma_\mathbf{y}$ ($\Gamma_\epsilon$) be the rejection rate in this given realization of the potential outcomes when we consider the  $\mathbf{y}$-fixed ($\boldsymbol{\epsilon}$-fixed) simulations. We repeat this procedure 20,000 times, so we can study the distribution of $\Gamma_\mathbf{y}$ and $\Gamma_\epsilon$ across simulations. 

The simulation results are presented in Appendix Table \ref{Table_design_based}. Column 1 presents the test size when we consider inference based on robust standard errors with a nominal level of 5\%.\footnote{For scenarios in which $\beta=0.5$, this means the rejection rate when we are testing the null $\beta=0.5$.} As expected, this is close to 5\% when we consider a DGP with no spatial correlation and no heterogeneous treatment effects. However, we have over-rejection when we have spatial correlation or heterogeneous treatment effects. Column 2 presents the probability that the $\mathbf{y}$-fixed simulations would suggest a rejection rate greater than 10\%, which would suggest that there is relevant spatial correlation in this setting (all results remain similar if we consider alternative thresholds), while column 3 presents the probability that the $\boldsymbol{\epsilon}$-fixed simulations would be greater than 10\%. 

We summarize the main conclusions:

\begin{enumerate}

\item As shown in Panels A, when there is a true treatment effect ($\beta \neq 0$), $\mathbf{y}$-fixed design-based simulations would be larger than, for example, 0.1 (indicating relevant spatial correlation) with a high probability, even when there is no spatial correlation ($\omega=0$).

\item As also shown in Panel A, the $\boldsymbol{\epsilon}$-fixed design-based simulations would have a much lower probability of indicating relevant spatial correlation when $\omega=0$. When the number of states increases, the probability that those simulations would  incorrectly indicate spatial correlation problems goes to zero.

\item When there is spatial correlation ($\omega \neq 0$), both assessments would be larger than 0.1 with a high probability, correctly indicating that there are relevant spatial correlation problems. This probability is increasing in the number of states (Panels B and C). 

\item These simulations are more informative about spatial correlation problems than checking whether there are significant effects in placebo regressions using pre-treatment outcomes (Panel C). For example, in this setting with $\omega = 0.3$, a placebo regression would be significant at 5\% in 14\% of the time. In contrast, the $\mathbf{y}$-fixed design-based simulations would be greater than 0.1 around 74\% (91\%) of the time when $N=20$ ($N=100$). So we may have settings in which the p-value of a placebo regression would be large (which would not raise a red flag for the applied researcher), but these simulations would correctly indicate that there is a problem with a high probability. We discuss the intuition for this result in Section \ref{Sec: simulations}.

\item These simulations would also be able to detect problems in case we have heterogeneous treatment effects at the state level (Panel E). {In this panel, we consider simulations in which $Y_{is}(0) = \epsilon_{is}$ (so there is no spatial correlation in the error for the potential outcomes when untreated), but $Y_{is}(1) = 0.4 \xi_s + Y_{is}(0)$, with $\xi_s \sim N(0,1)$. Therefore, the superpopulation ATE is still zero, but we have heterogeneous treatment effects for different states. }

\end{enumerate}

\subsection{Probability of flagging spatial correlation} \label{Appendix_SS_applications}

In Section \ref{Sec: SS illustrations}, we discuss the use of design-based simulations to detect spatial correlation in a series of shift-share design applications. Here we consider the probability that those simulations would correctly flag spatial correlation problems in case they are actually present. 

We consider the following exercise. Let $w_{if} \geq 0$ be the shares of a given application. We construct an outcome vector given by $Y_i^\ast = Z_i + \gamma \sum_{f=1}^F w_{if}{\mathcal{X}}^\ast_f$, where $Z_i \sim N(0,1)$, while ${\mathcal{X}}^\ast_f \sim N(0,1)$ are random shocks that will \underline{not} be the shocks considered by the applied researcher to construct the shift-share variable. We can think of ${\mathcal{X}}^\ast_f$ as unobserved variables in the error term that might be spatially correlated. Since those unobservables have the same structure of shares as the real shocks that the applied researcher observes, this spatially correlated shocks may generate relevant over-rejection for inference based on cluster-robust standard errors cluster-robust standard errors cluster-robust standard errors, as discussed by \cite{Adao}. Note that spatial correlation will be stronger if $|\gamma|$ is larger.

We consider then that the applied researcher runs a shift-share regression using the shift-share variable $X_i = \sum_{f=1}^F w_{if}{\mathcal{X}}_f$ (where ${\mathcal{X}}_f \sim N(0,1)$ are the shocks that she observes), and $Y_i^\ast$ as the outcome variable. Again, all of those $N(0,1)$ variables are iid. Since $X_i $ and $Y_i^\ast$ are independent, the expected value of the shift-share estimator in this DGP is zero. 

We consider simulations of this model with the structure of the three applications considered in Section \ref{Sec: SS illustrations}. For each realization of these random variables (that generates $Y_i^\ast$ and $X_i$), we (i) estimate the shift-share regression, and test the null using cluster-robust standard errors, and (ii) calculate the rejection rates in $\mathbf{y}$-fixed and $\boldsymbol{\epsilon}$-fixed design-based simulations. We do that for a number of different values for $\gamma$.

For each $\gamma$, we also calculate the test size. For all empirical applications, this is close to 5\% when $\gamma=0$ (because there is no spatial correlation in this case). However, when $\gamma$ increases, then we start to have over-rejection. As a result from these simulations, varying $\gamma$, we have information on the true size of the test, and on the proportion of times in which the design-based simulations would flag spatial correlation problems (which we define as having a rejection rate greater than 0.1 in the design-based simulations).
We plot in Appendix Figure \ref{Fig_SS_flagging} the probability of flagging a problem as a function of the test size (which, in turn, is a function of $\gamma$).

For the application from \cite{Autor}, these simulations would have a large probability of detecting problems, when there is spatial correlation. For example, if the spatial correlation is such that the test size using cluster-robust standard errors is 17\%, there would be a 90\% chance that the design-based simulations would flag spatial correlation problems.  For the application from \cite{Acemoglu}, the probability of detecting problems is a bit lower. For example, if the spatial correlation is such that the test size is around 15\%, there would be a 60\%  probability of flagging spatial correlation problems. If the spatial correlation is stronger, then the probability of detecting problems would be larger. Finally, we note that the probability of detecting problems is much smaller for the application from \cite{Dix}. When the test size is around 15\%, we would only have a 35\% probability that these design-based simulations would suggest that there are relevant spatial correlation problems. 

In case the design-based simulations do not detect problems due to spatial correlation, we recommend that applied researchers consider this kind of simulations. This way, they can evaluate whether these simulations would have a high probability of detecting problems when there are meaningful spatial correlation problems. Also, we note that we can make these simulations more informative about spatial correlation problems in case we can combine information from multiple placebo outcomes.

\pagebreak

\begin{figure} 

\begin{center}
\caption{{\bf Probability of detecting spatial correlation problems}} \label{Fig_SS_flagging}

\begin{tabular}{ccc}

A: China shocks & B: Exposure to robots & C: Trade liberalization \\

\includegraphics[scale=0.37]{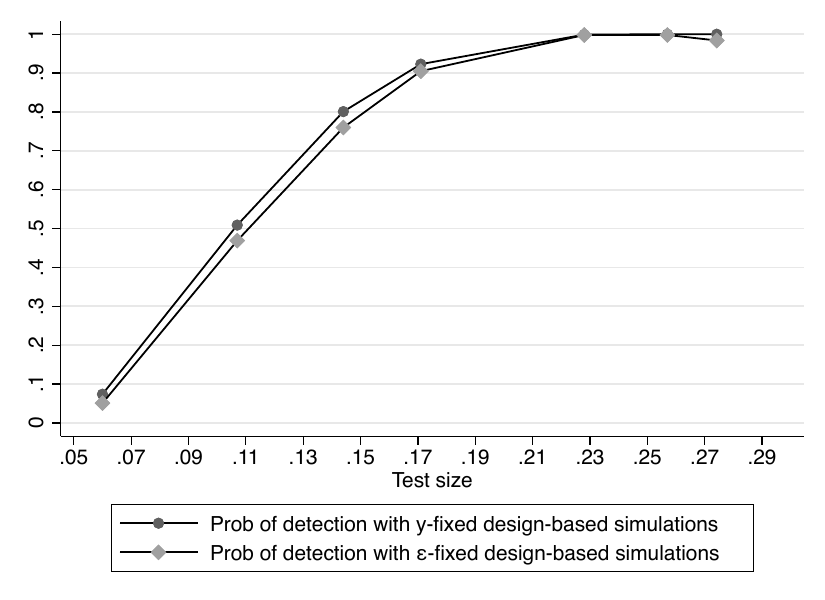} & \includegraphics[scale=0.37]{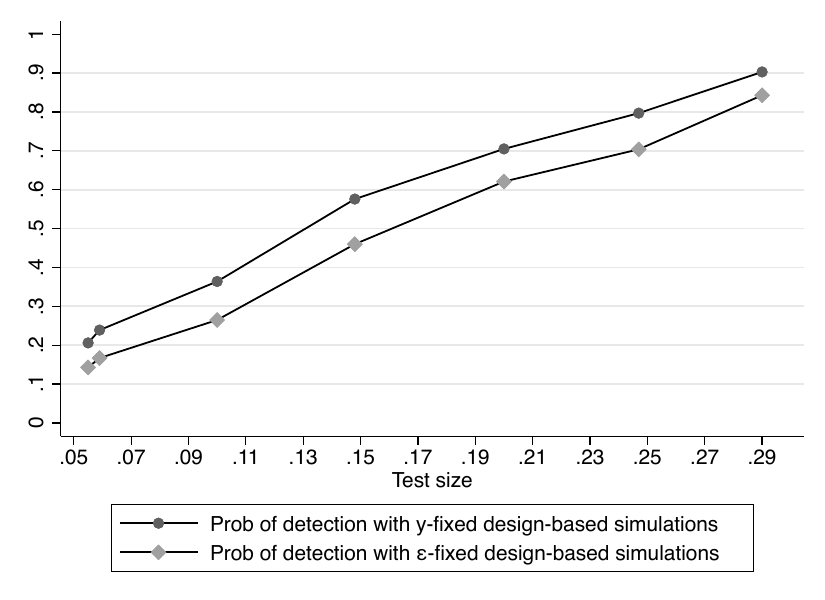} & \includegraphics[scale=0.37]{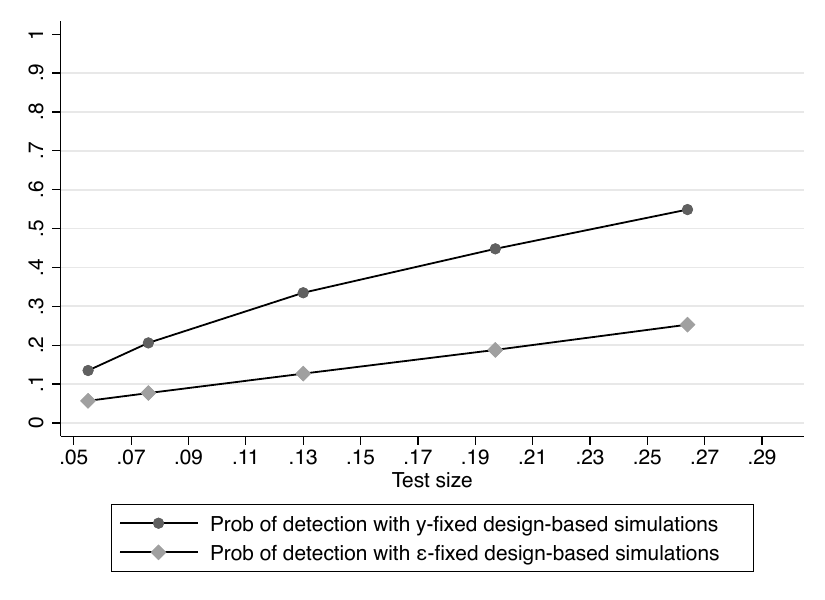}

\end{tabular}

\end{center}

\small{Notes: these figures present the results from the simulations described in Section \ref{Appendix_SS_applications}. Each point in the graphs represent one choice of $\gamma$, which determines the strength of the spatial correlation in the DGP used in the simulations. For that $\gamma$, we plot the implied rejection rate for inference using cluster-robust standard errors (which is increasing in $|\gamma|$), and the probabilities that the rejection rates in the design-based simulations are greater than 0.1. We consider the application from \cite{Autor} in Figure A, the one from \cite{Acemoglu} in Figure B, and from \cite{Dix} in Figure C. }

\end{figure}

\pagebreak

\begin{table}

\begin{center}
\caption{{\bf Design-based simulations }} \label{Table_design_based}
 \begin{tabular}{l C{3cm} C{3cm} C{3cm} }
\cline{1-4}

& Test size & $\Pr(\Gamma_{\mathbf{y}} > 0.1)$  & $\Pr(\Gamma_{\boldsymbol{\epsilon}} > 0.1)$  \\

 & (1)  & (2)  &  (3) \\ \cline{1-4}

\multicolumn{4}{c}{Panel A: positive treatment effect ($\beta=0.5$), but no spatial correlation ($\omega=0$)} \\
$N=20$ & 0.051 & 0.632 & 0.091 \\
$N=100$ & 0.049 & 0.715 & 0.008 \\
\\
\multicolumn{4}{c}{Panel B: positive treatment effect ($\beta=0.5$), and spatial correlation ($\omega=0.3$) } \\

$N=20$ & 0.140 & 0.927 & 0.688 \\
 
$N=100$ & 0.138 & 0.998 & 0.902 \\
\\
 
 \multicolumn{4}{c}{Panel C: no treatment effect ($\beta=0$), and spatial correlation ($\omega=0.3$) } \\
$N=20$ & 0.140 & 0.743 & 0.689 \\
 
$N=100$ & 0.138 & 0.913 & 0.902 \\
\\

 \multicolumn{4}{c}{Panel D: no treatment effect ($\beta=0$), and no spatial correlation ($\omega=0$) } \\
$N=20$ & 0.051 & 0.114 & 0.091 \\
$N=100$ & 0.049 & 0.009 & 0.008 \\
\\

 \multicolumn{4}{c}{Panel E: heterogeneous treatment effects } \\
$N=20$ & 0.129 & 0.672 & 0.615 \\

$N=100$ & 0.130 & 0.840 & 0.826 \\

\cline{1-4}

\end{tabular}

\end{center}

{\footnotesize{Notes: this table presents the results from the simulations discussed in Section \ref{Appendix_permutation_shocks}. Column 1 presents test size for inference based on robust standard errors, with a nominal level of 5\%. Column 2 (column 3) presents the probability that the $\mathbf{y}$-fixed design-based simulations ($\boldsymbol{\epsilon}$-fixed design-based simulations) are greater than 0.1. This would flag that inference based on robust standard errors might be problematic due to spatial correlation (all results remain similar if we consider alternative thresholds).  We run 20.000 simulations (draws of the original data) for each scenario, and for each draw $\Gamma_\mathbf{y}$ and $\Gamma_{\boldsymbol{\epsilon}}$ are constructed using 500 permutations. }
}

\end{table}

\begin{table}

\begin{footnotesize}
\begin{center}
\caption{{\bf Shift-share designs: point estimates and standard errors }} \label{Table_appendix}

 \begin{tabular}{lcccccccc}
\cline{1-9}
\cline{1-9}

&\multicolumn{2}{c}{China shock} & & \multicolumn{2}{c}{Exposure to robots} & & \multicolumn{2}{c}{Trade liberalization} \\ \cline{2-3} \cline{5-6} \cline{8-9} 

& Main effects & Placebo && Main effects & Placebo & & Main effects & Placebo \\

 & (1) & (2) & & (3) & (4) & & (5) & (6) \\ 
 
\cline{1-9}

Estimate & -0.504 & -0.110 & & -0.516 & -0.217 & & -1.976 & 0.727 \\
 \\
CRVE \\
~~~ Standard error & 0.103 & 0.048 & & 0.118 & 0.151 & & 0.822 & 1.096 \\
~~~ p-value & 0.000 & 0.023 & & 0.000 & 0.152 & & 0.016 & 0.508 \\
 \\
CRVE - HC3 \\
~~~ Standard error & 0.116 & 0.052 & & 0.150 & 0.226 & & 0.839 & 1.147 \\
~~~ p-value & 0.000 & 0.034 & & 0.001 & 0.338 & & 0.018 & 0.526 \\
 \\
\cite{Adao} \\
~~~ Standard error & 0.138 & 0.036 & & 0.053 & 0.070 & & 0.311 & 0.227 \\
~~~ p-value & 0.000 & 0.003 & & 0.000 & 0.002 & & 0.000 & 0.001 \\
 \\
\multicolumn{2}{l}{\cite{Adao} (null imposed)} \\
~~~ Standard error & 0.208 & 0.048 & & 0.115 & 0.342 & & 0.545 & 0.442 \\
~~~ p-value & 0.016 & 0.022 & & 0.000 & 0.525 & & 0.000 & 0.100 \\

\\

\# of clusters & 48 & 48 & & 48 & 48 & & 91 & 91 \\
 
\# of observations & 772 & 772 & & 722 & 722 & & 411 & 411 \\
 
\# of sectors & 395 & 395 & & 19 & 19 & & 20 & 20 \\

\cline{1-9}
\cline{1-9}

\end{tabular}

\end{center}

\footnotesize{ Notes: this table presents the estimates, standard errors, and p-values when we consider inference based on cluster-robust standard errors (CRVE), cluster-robust standard errors with  HC3 correction (CRVE - HC3), and the procedure proposed by \cite{Adao} (without and with the null imposed). Columns 1 and 2 refer to the application from \cite{Autor}. We consider the specification that \cite{Adao} used in their Table I. Column 1 uses estimates for the main effects, while column 2 uses a placebo specification (considering variations in the outcome variable from 1980 to 1990). Columns 3 and 4 refer to the application from \cite{Acemoglu}. Column 3 uses the specification from column 6 in their Table 2, while column 4 uses the specification from column 4 in their Table 4. Columns 5 and 6 refer to the application from \cite{Dix}. Column 5 uses the specification from column 1 in their Table 2, while column 6 uses the specification from column 1 in their Table 4. }

\end{footnotesize}

\end{table}

\end{document}